\documentclass[11pt,a4paper,reqno]{amsart}

\usepackage{times}
\usepackage{xspace}
\usepackage[inline]{enumitem}
\usepackage[numbers,sort&compress,longnamesfirst]{natbib}
\usepackage{geometry}
\geometry{hmargin=2.0cm,vmargin=2.8cm}
\usepackage[colorlinks=true, linkcolor=blue ,citecolor=blue]{hyperref}
\usepackage[capitalize]{cleveref}
\usepackage{amsfonts,comment}
\usepackage{amsmath,mathtools}
\usepackage{amssymb,latexsym}
\usepackage{color, graphicx}
\usepackage{todonotes}
\usepackage{algorithm}
\usepackage{algpseudocode}
\usepackage{subcaption}
\usepackage{tabu}
\usepackage{cleveref}

\theoremstyle{plain}

\newtheorem{theorem}{Theorem}[section]

\newtheorem{lemma}[theorem]{Lemma}
\newtheorem{examples}[theorem]{Examples}
\newtheorem{foo}[theorem]{Remarks}
\newtheorem{As}{Assumption}

\theoremstyle{definition}
\newtheorem{remark}[theorem]{Remark}
\newtheorem{example}[theorem]{Example}

\newcommand{\be}{\begin{equation}}
\newcommand{\ee}{\end{equation}}
\newcommand{\bea}{\begin{eqnarray}}
\newcommand{\eea}{\end{eqnarray}}
\newcommand{\beas}{\begin{eqnarray*}}
\newcommand{\eeas}{\end{eqnarray*}}



\newlist{myitems}{enumerate}{1}
\setlist[myitems]{label=\arabic*, font=\bfseries, resume}


\def\<{\langle}
\def\>{\rangle}

\def\\ud sum{\displaystyle\sum}

\def\ue{\mathrm e}
\def\ud{\mathrm d}

\numberwithin{equation}{section}
\numberwithin{figure}{section}
\numberwithin{table}{section}

\frenchspacing


\begin{document}

\title[A deep gradient flow method for option pricing]{A time-stepping deep gradient flow method for option pricing\\in (rough) diffusion models}

\author[A. Papapantoleon]{Antonis Papapantoleon}
\author[J. Rou]{Jasper Rou}

\address{Delft Institute of Applied Mathematics, EEMCS, TU Delft, 2628CD Delft, The Netherlands \& Department of Mathematics, School of Applied Mathematical and Physical Sciences, National Technical University of Athens, 15780 Zografou, Greece \& Institute of Applied and Computational Mathematics, FORTH, 70013 Heraklion, Greece}
\email{\href{mailto:a.papapantoleon@tudelft.nl}{a.papapantoleon@tudelft.nl}}

\address{Delft Institute of Applied Mathematics, EEMCS, TU Delft, 2628CD Delft, The Netherlands}
\email{\href{mailto:j.g.rou@tudelft.nl}{j.g.rou@tudelft.nl}}

\thanks{The authors are grateful to Emmanuil Georgoulis and Costas Smaragdakis for several helpful discussions and advice during the work on this project.
AP gratefully acknowledges the financial support from the Hellenic Foundation for Research and Innovation (H.F.R.I.) under the ``First Call for H.F.R.I. Research Projects to support Faculty members and Researchers and the procurement of high-cost research equipment grant'' (Project Number: 2152).
The authors also acknowledge the use of computational resources of the DelftBlue supercomputer, provided by the Delft High-Performance Computing Centre \cite{DHPC2022}.}

\keywords{Option pricing, diffusion models, rough volatility models, lifted volatility models, PDE, time-stepping, gradient flow, artificial neural network.}  

\subjclass[2020]{91G20, 91G60, 68T07.}

\date{}

\begin{abstract}
We develop a novel deep learning approach for pricing European options in diffusion models, that can efficiently handle high-dimensional problems resulting from Markovian approximations of rough volatility models.
The option pricing partial differential equation is reformulated as an energy minimization problem, which is approximated in a time-stepping fashion by deep artificial neural networks.
The proposed scheme respects the asymptotic behavior of option prices for large levels of moneyness, and adheres to \textit{a priori} known bounds for option prices.
The accuracy and efficiency of the proposed method is assessed in a series of numerical examples, with particular focus in the lifted Heston model.
\end{abstract}

\maketitle


\section{Introduction}

Stochastic volatility models have been popular in the mathematical finance literature because they allow to accurately model and reproduce the shape of implied volatility smiles for a single maturity.
They require though certain modifications, such as making the parameters time- or maturity-dependent, in order to reproduce a whole volatility surface; see \textit{e.g.} the comprehensive books by \citet{Gatheral_2012} or \citet{Bergomi_2016}.
The class of rough volatility models, in which the volatility process is driven by a \textit{fractional} Brownian motion, offers an attractive alternative to classical volatility models, since they allow to reproduce many stylized facts of asset and option prices with only a few (constant) parameters; see \textit{e.g.} the seminal articles by \citet{gatheral2018volatility} and \citet{bayer2016pricing}, and the recent volume by \citet{Bayer_etal_SIAM_2023}.
The presence of fractional Brownian motion in the dynamics of rough volatility models yields that the volatility process is not a semimartingale, and the model is not Markovian, which means that the simulation of the dynamics, or the pricing and hedging of derivatives, become challenging tasks.
This has attracted increasing attention from the mathematical finance community and has led to the creation of several innovative methods to tackle these challenges.

Markovian approximations of fractional processes is a popular method for tackling the challenges arising in these models, since they allow to relate rough volatility models to their classical counterparts. 
\citet{abi2019multifactor,abi2019markovian} have designed multifactor stochastic volatility models approximating rough volatility models and enjoying a Markovian structure. 
\citet{zhu2021markovian} have showed that the rough Bergomi model can be well-approximated by the forward-variance Bergomi model with appropriately chosen weights and mean-reversion speed parameters, which has the Markovian property.
\citet{bayer2023markovian,bayer2023weak} have studied Markovian approximations of stochastic Volterra equations using an $N$-dimensional diffusion process defined as the solution to a system of (ordinary) stochastic differential equations. 
\citet{cuchiero2020generalized} have provided existence, uniqueness and approximation results for Markovian lifts of affine rough volatility models of general (jump) diffusion type.
Moreover, \citet{bonesini2023elephant} have proposed a theoretical framework that exploits convolution kernels to transform a Volterra path-dependent (non-Markovian) stochastic process into a standard (Markovian) diffusion process.

Novel simulation schemes have also been developed in order to speed up Monte Carlo methods for rough volatility models.
\citet{Bennedsen_Lunde_Pakkanen_2017} have developed a hybrid simulation scheme for Brownian semistationary processes, and applied this to the rough Bergomi model.
This method was further refined by \citet{McCrickerd_Pakkanen_2018} who developed variance reduction techniques for conditionally log-normal models, and by \citet{Fukasawa_Hirano_2021} who optimized the use of random numbers through orthogonal projections.
Moreover, \citet{Gatheral_2022} has developed hybrid quadratic-exponential schemes for the simulation of rough affine forward variance models, while \citet{bayer2024efficient} have developed efficient and accurate simulation schemes for the rough Heston model based on low-dimensional Markovian representations.

Deep learning and neural network methods have also been applied in the field of rough stochastic volatility modeling.
\citet{horvath2021deep2} presented a neural network-based method that performs the calibration task for the full implied volatility surface in (rough) volatility models.
\citet{jacquier2023deep} introduced a deep learning–based algorithm to evaluate options in affine rough stochastic volatility models.
\citet{jacquier2023random} constructed a deep learning-based numerical algorithm to solve path-dependent partial differential equations arising in the context of rough volatility. 

Many other methods have been utilized and applied to rough stochastic volatility models. 
Let us mention here, among others, that \citet{el2019characteristic} have computed the characteristic function of the log‐price in the rough Heston model, thus opening the door for the application of Fourier transform methods for option pricing. 
\citet{Bayer_BenHammouda_Tempone_2020} proposed the application of adaptive sparse grids and quasi Monte Carlo methods for option pricing in the rough Bergomi model, while \citet{Bayer_Friz_Gassiat_Martin_Stemper_2020} have applied Hairer's theory of regularity structures in order to analyze rough volatility models.
In addition, \citet{bayer2019short, Horvath_Jacquier_Lacombe_2019} and \citet{Jacquier_Pannier_2022}, among others, have studied asymptotic properties of rough volatility models in various regimes.

The connection between partial differential equations (PDEs) and option pricing is deep and well-studied in the literature.
Several methods have been developed in order to price options in diffusion or stochastic volatility models, especially in high-dimensional settings, using sparse grids, asymptotic expansions, finite difference and finite elements methods; see \textit{e.g.} \citet{Reisinger_Wittum_2007,DURING20124462,Hilber_Matache_Schwab_2005} and the comprehensive book by \citet{hilber2013computational}.

The recent advances in deep learning methods have also created an intense interest in the solution of PDEs by deep learning and neural network methods.
\citet{sirignano2018dgm} proposed to solve high-dimensional PDEs by approximating the solution with a deep neural network which is trained to satisfy the differential operator, initial condition, and boundary conditions.
\citet{raissi2019physics} introduced physics informed neural networks, \textit{i.e.} neural networks that are trained to solve supervised learning tasks while respecting any given law of physics described by general nonlinear PDEs.
\citet{van2022optimally} discussed the choice of a norm in the loss function for the training of neural networks.
\citet{han2018solving} introduced a deep learning-based approach that can handle general high-dimensional parabolic PDEs by reformulating the PDE using backward stochastic differential equations and approximating the gradient of the unknown solution by neural networks.
\citet{yu2018deep} proposed a deep learning-based method for numerically solving variational problems, particularly the ones that arise from PDEs.
\citet{liao2019deep} proposed a method to deal with the essential boundary conditions encountered in the deep learning-based numerical solvers for PDEs.
\citet{georgoulis2023discrete} considered the approximation of initial and boundary value problems involving high-dimensional, dissipative evolution PDEs using a deep gradient flow neural network framework. 
In similar spirit, \citet{park2023deep} proposed a deep learning-based minimizing movement scheme for approximating the solutions of PDEs.
The work of \cite{georgoulis2023discrete} was followed up by \citet{georgoulis2024deep}, who developed an implicit-explicit deep minimizing movement method for pricing European basket options written on assets that follow jump-diffusion dynamics.
Let us refer the reader to the forthcoming book by \citet{jentzen2023mathematical}, who provide an introduction to the topic of deep learning algorithms and a survey of the (vast) literature.

The starting point for our research is the existence of a Markovian approximation for a rough volatility model or, more generally, another Markovian model that describes empirical data well; see \textit{e.g.} \citet{Romer_2022} and \citet{abi2024volatility}.
Inspired by \citet{georgoulis2023discrete}, we consider the option pricing PDE, introduce a time-stepping discretization of the time derivative, and reformulate the spatial derivatives as an energy minimization problem; the main difference with \citet{georgoulis2023discrete} is the presence of the drift (first order) term, which we treat in an explicit fashion in the discretization.
This gradient flow formulation of the PDE gives rise to a suitable cost function for the training, via stochastic gradient descent, of deep neural networks that approximate the solution of the PDE.
The main advantages of this method are twofold: on the one hand, the neural network from the previous time step is a good initial guess for the neural network optimization in the current time step, thus reducing the number of training stages necessary.
On the other hand, the energy formulation allows to consider only first-order derivatives, thus reducing the training time in each time step / training stage significantly.
In addition, we utilize available information about the qualitative behavior of option prices in order to assist the training of the neural network. 
More specifically, the architecture of the neural network respects the asymptotic behavior of option prices for large levels of moneyness, and adheres to \textit{a priori} known bounds for option prices.
The empirical results show that the proposed method is accurate, with a small overall error in the order of $10^{-3}$ and comparable with other deep learning methods, while the training times are significantly smaller compared to similar methods and do not grow significantly with the dimension of the underlying space.

This article is organized as follows: in Section \ref{sec:problem}, we revisit the option pricing PDE for Markovian diffusion models.
In Section \ref{sec:TDGF}, we explain how we develop our method for solving option pricing PDEs, with focus on the time discretization and energy formulation.
In Section \ref{sec:examples}, we revisit classical models for asset prices, with particular focus in the lifted Heston model.
In Section \ref{sec:implementation_details}, we explain certain details for the design and implementation of the method. 
In Section \ref{sec:results}, we present and discuss the outcome of our numerical experiments, while Section \ref{sec:conclusion} concludes this article.


\section{Problem formulation}
\label{sec:problem}

Let $S$ denote the price process of a financial asset that evolves according to a (Markovian) diffusion model, and consider a European vanilla derivative on $S$ with payoff $\Psi(S_T)$ at maturity time $T>0$.
Using the fundamental theorem of asset pricing and the Feynman--Kac formula, the price of this derivative can be written as the solution to a PDE in these models. 
Indeed, let $u: [0,T] \times \Omega \to \mathbb{R}$ denote the price of this derivative, where $\Omega \subseteq \mathbb{R}^{n+1}$ and $t$ denotes the time to maturity. 
Then, $u$ solves the general PDE
\begin{equation}
\label{eq:non_linear_pde}
\begin{aligned}
\begin{cases}
    u_t(t,x) + \mathcal{A} u(t,x) + ru(t,x) = 0, \quad & (t,x) \in (0,T] \times \Omega,\\
    u(0,x) = \Psi(x), \quad & x \in \Omega,
\end{cases} 
\end{aligned}
\end{equation}
where $\mathcal{A}$ is a second order differential operator of the form
\begin{equation}
\label{eq:generator}
    \mathcal{A} u = - \sum_{i,j=0}^n a^{ij} \frac{\partial^2 u}{\partial x_i \partial x_j} + \sum_{i=0}^{n} \beta^i \frac{\partial u}{\partial x_i}.    
\end{equation}
The coefficients $a^{ij}, \beta^i$ of the generator $\mathcal{A}$ are directly related to the dynamics of the stochastic process $S$ and can, in general, depend on the time and the spatial variables. 
Their particular form will be clarified in the applications later; see \cref{sec:examples}.
Moreover, for the time being, let us also prescribe homogeneous Dirichlet boundary conditions on $\partial \Omega$.

We will later rewrite the PDE \eqref{eq:non_linear_pde} as an energy minimization problem, therefore we need to split the operator in a symmetric and an asymmetric part. 
The symmetric part will then be associated to a Dirichlet energy.
Using the product rule, we can rewrite $\mathcal{A}$ as follows
\begin{align*}
    \mathcal{A} u & = \sum_{i=0}^n \beta^i \frac{\partial u}{\partial x_i} - \sum_{i,j=0}^n \frac{\partial}{\partial x_j} \left( a^{ij} \frac{\partial u}{\partial x_i} \right) + \sum_{i,j=0}^n \frac{\partial a^{ij}}{\partial x_j} \frac{\partial u}{\partial x_i} \\
    & = \sum_{i=0}^n \left( \beta^i + \sum_{j=0}^n \frac{\partial a^{ij}}{\partial x_j} \right) \frac{\partial u}{\partial x_i} - \sum_{i,j=0}^n \frac{\partial}{\partial x_j} \left( a^{ij} \frac{\partial u}{\partial x_i} \right).
\end{align*}
Define $b^i = \beta^i + \sum_{j=0}^n \frac{\partial a^{ij}}{\partial x_j}$. 
Then, $\mathcal{A}$ takes the form
\begin{equation}
\label{eq:operator_form}
    \mathcal{A} u = - \nabla \cdot \left( A \nabla u \right) + \mathbf{b} \cdot \nabla u,   
\end{equation}
where
\begin{equation}
\label{eq:coefficients}
A = \begin{bmatrix}
    a^{00} & a^{10} & \dots & a^{n0} \\
    a^{01} & a^{11} & \dots & a^{n1} \\
    \vdots & \vdots & \ddots & \vdots \\
    a^{0n} & a^{1n} & \dots & a^{nn}
\end{bmatrix}
\quad \textrm{and} \quad
\mathbf{b} = \begin{bmatrix}
    b^0 \\
    b^1 \\
    \vdots \\
    b^n
\end{bmatrix}.
\end{equation}
Let us point out that, the order of the derivatives with respect to $x_i$ and $x_j$ in the second derivative term can be interchanged in \eqref{eq:generator}, hence we can always choose $a^{ij} = a^{ji}$. 
Therefore, $A$ is a symmetric matrix.


\section{Energy minimization and a deep gradient flow method}
\label{sec:TDGF}

Let us now explain how we will develop our deep neural network method for solving the PDE \eqref{eq:non_linear_pde}.
We proceed in three steps.
First, we discretize the time derivative in Subsection \ref{sec:time} and introduce a time-stepping scheme. 
Second, we rewrite the solution of the PDE as an energy minimization problem in Subsection \ref{sec:variational}, which gives rise to a suitable cost function. 
Third, we approximate the minimizer by a neural network, which is trained using stochastic gradient descent, in Subsection \ref{sec:neural}


\subsection{Time discretization}
\label{sec:time}

A key ingredient of our method is that we do not use time as an input variable for the neural network, thus training a global space-time network for solving the PDE, as in \citet{sirignano2018dgm} and \citet{raissi2019physics}.
Instead, following \citet{georgoulis2023discrete}, we wish to train a neural network for each time step.
The main insight is that the neural network from the previous time step is a good initial guess for the neural network optimization in the current time step, thus reducing the number of training stages necessary.

Let us divide the time interval $(0,T]$ into $N$ equally spaced intervals $(t_{k-1},t_k]$, with $h = t_k - t_{k-1} = \frac{1}{N}$ for $k=0,1,\dots,N$. 
Let $U^k$ denote the approximation to the solution of the PDE $u(t_k)$ at time step $t_k$, using the following time discretization scheme
\[
\begin{aligned}
    \frac{U^k - U^{k-1}}{h} - \nabla \cdot \left( A \nabla U^k \right) + \mathbf{b} \cdot \nabla U^{k} + r U^k & = 0, \\
    U^0 & = \Psi.
\end{aligned}
\]
In order to be able to rewrite the PDE as an energy minimization problem, we wish to treat the asymmetric part as a constant function. 
Therefore, we substitute the function $U^k$ in the asymmetric part with its value from the previous time point, and thus consider the backward differentiation scheme
\begin{equation}
\label{eq:discretization}
\begin{aligned}
\frac{U^k - U^{k-1}}{h} - \nabla \cdot \left( A \nabla U^k \right) + F \left( U^{k-1} \right) + r U^k & = 0, \\
    U^0 & = \Psi,
\end{aligned}
\end{equation}
where $F(u) = \mathbf{b} \cdot \nabla u$.
Convergence analysis of this scheme appears in \citet{Akrivis_Crouzeix_Makridakis_1999}.

\begin{remark}
Let us point out that non-uniform grids can also be used for the discretization of time, in order to employ a finer grid initially when the solution is less smooth. 
The numerical experiments in \cref{sec:examples} show that this is not necessary in the examples we consider.
\end{remark}


\subsection{Variational formulation}
\label{sec:variational}

We would like now to associate the discretized PDE in \eqref{eq:discretization} with an energy functional, in order to define later a suitable cost function for the training of the neural network.
In other words, we would like to find an energy functional $I(u)$ such that $U^k$ is a critical point of $I$. 
Let us rewrite \eqref{eq:discretization} as follows:
\begin{equation}
\label{eq:discretization2}
\left( U^k - U^{k-1} \right) + h \left( - \nabla \cdot \left( A \nabla U^k \right) + r U^k + F \left( U^{k-1} \right) \right) = 0, 
    \quad U^0 = \Psi.
\end{equation}
Then, we have the following result.

\begin{lemma}\label{lem_var-form}
Assume that $A$ is symmetric and positive semi-definite.
Then, $U^k$ solves \eqref{eq:discretization2} if and only if $U^k$ minimizes
\[
    \begin{aligned}
        I^k(u) = & \frac{1}{2} \left \Vert u - U^{k-1} \right \Vert_{L^2(\Omega)}^2 + h \left( \int_{\Omega} \frac{1}{2} \left( \left( \nabla u \right)^\mathsf{T} A \nabla u + r u^2 \right) + F \left( U^{k-1} \right) u \ud x
        \right).
    \end{aligned}
\]
\end{lemma}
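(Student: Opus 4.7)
The plan is to prove the equivalence by showing that the PDE \eqref{eq:discretization2} is precisely the Euler--Lagrange equation of the functional $I^k$, and then to use convexity to upgrade "critical point" to "minimizer".

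First I would compute the Gâteaux derivative of $I^k$ at $u \in H^1_0(\Omega)$ in an arbitrary direction $v \in H^1_0(\Omega)$ (test functions vanishing on $\partial\Omega$, consistent with the prescribed homogeneous Dirichlet boundary conditions). A direct differentiation in $\varepsilon$ at $\varepsilon=0$ of $I^k(u+\varepsilon v)$ yields, using the symmetry of $A$ to combine the two cross terms from $(\nabla(u+\varepsilon v))^\mathsf{T} A \nabla(u+\varepsilon v)$,
\[
    \langle (I^k)'(u), v \rangle = \int_\Omega (u - U^{k-1}) v \, \mathrm{d}x + h \int_\Omega \left( (\nabla v)^\mathsf{T} A \nabla u + r u v + F(U^{k-1}) v \right) \mathrm{d}x.
\]
Next I would integrate by parts in the Dirichlet-energy term: since $v$ vanishes on $\partial\Omega$, the boundary term drops and $\int_\Omega (\nabla v)^\mathsf{T} A \nabla u \, \mathrm{d}x = -\int_\Omega v\, \nabla \cdot (A \nabla u) \, \mathrm{d}x$. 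Substituting this back, the stationarity condition $\langle (I^k)'(u), v \rangle = 0$ for all $v \in H^1_0(\Omega)$ becomes
\[
    \int_\Omega \Big[ (u - U^{k-1}) + h\bigl( -\nabla \cdot (A \nabla u) + r u + F(U^{k-1}) \bigr) \Big] v \, \mathrm{d}x = 0 \qquad \forall v \in H^1_0(\Omega),
\]
which by the fundamental lemma of the calculus of variations is exactly \eqref{eq:discretization2}. Hence critical points of $I^k$ coincide with solutions of the discretized PDE.

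Finally, I would argue that critical points are in fact minimizers by establishing strict convexity of $I^k$. The functional is quadratic in $u$, and its quadratic form applied to $v$ is
\[
    \|v\|_{L^2(\Omega)}^2 + h \int_\Omega \left( (\nabla v)^\mathsf{T} A \nabla v + r v^2 \right) \mathrm{d}x,
\]
which, under the assumed positive semi-definiteness of $A$ (and non-negativity of $r$, as is standard for an interest rate), is bounded below by $\|v\|_{L^2}^2$ and hence strictly positive for $v \not\equiv 0$. Therefore $I^k$ is strictly convex, its unique critical point is its unique minimizer, and the equivalence follows.

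The main obstacle I anticipate is merely a bookkeeping one: ensuring the symmetry of $A$ is invoked correctly when differentiating the Dirichlet energy (so that the cross term assembles into $2(\nabla v)^\mathsf{T} A \nabla u$ rather than $(\nabla v)^\mathsf{T}(A+A^\mathsf{T}) \nabla u$), and verifying that the boundary integral in the integration-by-parts step genuinely vanishes --- both of which rely on hypotheses already present (symmetry of $A$ and the homogeneous Dirichlet data). Beyond this, the argument is standard calculus of variations.
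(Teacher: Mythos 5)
Your proposal is correct and follows essentially the same route as the paper's proof: compute the first variation of $I^k$ in a direction $v$ vanishing on the boundary, use the symmetry of $A$ to assemble the cross terms, integrate by parts to recover the weak form of \eqref{eq:discretization2}, and then use positivity of the second variation (equivalently, strict convexity of the quadratic functional, relying on $A$ being positive semi-definite and $r\geq 0$) to conclude that the critical point is a minimizer. The only cosmetic difference is that the paper phrases the convexity step as $(i^k)''(\tau)>0$ along the line $U^k+\tau v$ rather than as global strict convexity of $I^k$; the two formulations are equivalent here.
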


\begin{proof}
Let $v$ be a smooth function that is zero on the boundary.
Then, for all such $v$, consider the function
\[
\begin{aligned}
i^k(\tau) &=  I^k \left( U^k + \tau v \right) = \int_{\Omega} \frac{1}{2} \left( U^k + \tau v - U^{k-1} \right)^2 \ud x \\
&\quad + h \int_{\Omega} \frac{1}{2} \left( \left( \nabla \left( U^k + \tau v \right) \right)^\mathsf{T} A \nabla \left( U^k + \tau v \right) + r \left( U^k + \tau v \right)^2 \right) + F \left( U^{k-1} \right) \left( U^k + \tau v \right) \ud x,
\end{aligned}
\]
for $\tau \in \mathbb{R}$. 
$U^k$ minimizes $I^k$ if and only if $\tau = 0$ minimizes $i^k$. 
Therefore, we get
\begin{align*}
    0 &=  (i^k) '(0) = \left[ \int_{\Omega} \left( U^k + \tau v - U^{k-1} \right) v \ud x \right. \\
    &\quad + \left. h \int_{\Omega} \frac{1}{2} \left( \left( \nabla v \right)^\mathsf{T} A \nabla \left( U^k + \tau v \right) + \left( \nabla \left( U^k + \tau v \right) \right)^\mathsf{T} A \nabla v \right) + r \left( U^k + \tau v \right) v + F \left( U^{k-1} \right) v \ud x \right]_{\tau=0} \\ 
    &= \int_{\Omega}  \left( U^k - U^{k-1} \right) v + \frac{h}{2} \left( \left( \nabla v \right)^\mathsf{T} A \nabla U^k + \left( \nabla U^k \right)^\mathsf{T} A \nabla v \right) + h r U^k v  + h F \left( U^{k-1} \right) v \ud x \\
    &= \int_{\Omega} \left( U^k - U^{k-1} \right) v +  h  A \nabla U^k \cdot \nabla v + h \left( r U^k + F \left( U^{k-1} \right) \right) v \ud x \\
    &= \int_{\Omega}  \left( \left( U^k - U^{k-1} \right) + h \left( - \nabla \cdot \left( A \nabla U^k \right) + r U^k + F \left( U^{k-1} \right) \right) \right) v \ud x,
\end{align*}
where in the second to last equality we used that $A$ is symmetric, and in the last equality we used the divergence theorem. 
This equality holds for all $v$ if and only if \eqref{eq:discretization2} holds.
The second derivative is positive; indeed, 
\[
\left( i^k \right)''(\tau) = (1+rh) \int_{\Omega} v^2 \ud x + h \int_{\Omega} \left( \nabla v \right)^\mathsf{T} A \nabla v \ud x > 0,
\]
since $A$ is positive semi-definite. 
Therefore, $\tau=0$ is indeed the minimizer.
\end{proof}


\subsection{Neural network approximation}
\label{sec:neural}

Using the results of the previous subsection, we define the following cost (loss) function for training the neural network:
\begin{align*}
L^k(u) = \frac{1}{2} \left \Vert u - U^{k-1} \right \Vert_{L^2(\Omega)}^2 
    + h  \int_{\Omega} \mathcal{E}[u]\ud x + h  \int_{\Omega} F \left( U^{k-1} \right) u \ud x,
\end{align*}
where
\begin{align*}
\mathcal{E}[u] = \frac{1}{2} \left( \left( \nabla u \right)^\mathsf{T} A \nabla u + r u^2 \right),
\end{align*}
denotes the respective Dirichlet energy.
    
Let $f^k(\mathbf{x}; \theta)$ denote a neural network approximation of $U^k$ with trainable parameters $\theta$. 
Applying a Monte Carlo approximation to the integrals, the discretized cost functional takes the form
\begin{align*}
L^k (\theta ; \mathbf{x}) 
    & = \frac{|\Omega|}{2M} \sum_{i=1}^{M} \left( f^k(\mathbf{x}_i;\theta) - f^{k-1}(\mathbf{x}_i) \right)^2 + h N^k \left( \theta ; \mathbf{x} \right),
\end{align*}
where
\begin{align*}
N^k \left( \theta ; \mathbf{x} \right) 
    &=  \frac{|\Omega|}{M} \sum_{i=1}^{M} \left[ \frac{1}{2} \left( \left( \nabla f^k(\mathbf{x}_i; \theta) \right)^\mathsf{T} A \nabla f^k(\mathbf{x}_i; \theta) + r \left( f^k({\mathbf{x}_i}; \theta) \right) ^2 \right) + \left( \mathbf{b} \cdot \nabla f^{k-1}(\mathbf{x}_i) \right) f^k(\mathbf{x}_i; \theta) \right].
\end{align*}
Here, $M$ denotes the number of samples $\mathbf{x}_i$.

In order to minimize this cost function, we use a stochastic gradient descent-type algorithm, \textit{i.e.} an iterative scheme of the form: 
\[
    \theta_{n+1} = \theta_n - \alpha_n \nabla_{\theta} L^k(\theta_n; \mathbf{x});
\] 
more specifically, we will use the Adam algorithm \cite{kingma2014adam}.
The hyper-parameter $\alpha_n$ is the step-size of our update, called the learning rate. 
An overview of the Time Deep Gradient Flow (TDGF) method appears in Algorithm \ref{alg:time_DNM}.

\begin{algorithm}
\caption{Time Deep Gradient Flow method}\label{alg:time_DNM}
\begin{algorithmic}[1]
\State Initialize $\theta_0^0$.
\State Set $f^0(\mathbf{x};\theta) = \Psi(\mathbf{x})$.
\For{each time step $k = 1,\dots,N$}
\State Initialize $\theta_0^k = \theta^{k-1}$.
\For{each sampling stage $n$}
\State Generate random points $\mathbf{x}_i$ for training.
\State Calculate the cost functional $L^k(\theta_n^k; \mathbf{x}_i)$ for the sampled points.
\State Take a descent step $\theta_{n+1}^k = \theta_n^k - \alpha_n \nabla_{\theta} L^k(\theta_n^k; \mathbf{x}_i)$.
\EndFor
\EndFor
\end{algorithmic}
\end{algorithm}


\section{Examples}
\label{sec:examples}

Let us now outline the different models to which we will apply our method, and derive the appropriate form of the generator in the option pricing PDE for each of these models.
In particular, we will treat the Black--Scholes model in Subsection \ref{sec:BS}, the Heston model in Subsection \ref{sec:heston}, and the lifted Heston model in Subsection \ref{sec:lifted_Heston}.
Let us point out that, straightforward but tedious computations, yield that the operator $A$ satisfies the assumptions of Lemma \ref{lem_var-form} for all three examples.


\subsection{Black--Scholes model}
\label{sec:BS}

In the \citet{black1973pricing} model, the dynamics of the stock price $S$ follow a geometric Brownian motion:
\[
\ud S_t = r S_t \ud t + \sigma S_t \ud W_t, \quad S_0 > 0,
\]
with $r, \sigma \in \mathbb{R}_+$ the risk free rate and the volatility respectively. 
The generator corresponding to these dynamics, in the form \eqref{eq:generator}, equals
\[
\mathcal{A} u = -\frac{1}{2} \sigma^2 x^2 \frac{\partial^2 u}{\partial x^2} - r x \frac{\partial u}{\partial x},
\]
and, applying the chain rule, we have
\[
\mathcal{A} u = - \frac{\partial}{\partial x} \left( \frac{1}{2} \sigma^2 x^2 \frac{\partial u}{\partial x} \right) + \sigma^2 x \frac{\partial u}{\partial x} - r x \frac{\partial u}{\partial x}.
\]
Therefore, the operator $\mathcal{A}$ takes the form \eqref{eq:operator_form} with the coefficients in \eqref{eq:coefficients} provided by 
\begin{align*}
    a &= \frac{1}{2} \sigma^2 x^2, \\
    b &= (\sigma^2 - r) x.
\end{align*}


\subsection{Heston model}
\label{sec:heston}

The \citet{heston1993closed} model is a popular stochastic volatility model with dynamics
\[
\begin{aligned}
\ud S_t & = r S_t \ud t + \sqrt{V_t} S_t \ud W_t, \quad & S_0 > 0, \\
\ud V_t & = \lambda ( \kappa - V_t) \ud t + \eta \sqrt{V_t} \ud B_t, \quad & V_0 > 0.
\end{aligned}
\]
Here $V$ is the variance process, $W, B$ are correlated (standard) Brownian motions, with correlation coefficient $\rho$, and $\lambda, \kappa, \eta \in \mathbb{R}_+$. 
The option pricing PDE \eqref{eq:non_linear_pde} in this model has two spatial variables $(x,v)$, corresponding to the asset price and the variance, and the generator corresponding to these dynamics, in the form \eqref{eq:generator}, equals
\[
\mathcal{A} u = - r x \frac{\partial u}{\partial x} - \lambda (\kappa - v) \frac{\partial u}{\partial v} - \frac{1}{2} x^2 v \frac{\partial^2 u}{\partial x^2} - \frac{1}{2} \eta^2 v \frac{\partial^2 u}{\partial v^2} - \rho \eta x v \frac{\partial^2 u}{\partial x \partial v}.
\]
Applying the chain rule again, we get
\[
\begin{aligned}
\mathcal{A} u = & - r x \frac{\partial u}{\partial x} - \lambda (\kappa - v) \frac{\partial u}{\partial v} - \frac{\partial}{\partial x} \left( \frac{1}{2} x^2 v \frac{\partial u}{\partial x} \right) + x v \frac{\partial u}{\partial x} - \frac{\partial}{\partial v} \left( \frac{1}{2} \eta^2 v \frac{\partial u}{\partial v} \right) + \frac{1}{2} \eta^2 \frac{\partial u}{\partial v}\\
& - \frac{\partial}{\partial x} \left( \frac{1}{2} \rho \eta x v \frac{\partial u}{\partial v} \right) + \frac{1}{2} \rho \eta v \frac{\partial u}{\partial v} - \frac{\partial}{\partial v} \left( \frac{1}{2} \rho \eta x v \frac{\partial u}{\partial x} \right) + \frac{1}{2} \rho \eta x \frac{\partial u}{\partial x}.
\end{aligned}
\]
Therefore, the operator $\mathcal{A}$ takes the form \eqref{eq:operator_form} with the coefficients in \eqref{eq:coefficients} provided by 
\begin{align*}
    a^{00} &= \frac{1}{2} x^2 v, \\
    a^{10} = a^{01} &= \frac{1}{2} \rho \eta x v, \\
    a^{11} & = \frac{1}{2} \eta^2 v, \\
    b^0 &= \left(-r + v + \frac{1}{2} \rho \eta \right) x, \\
    b^1 &= \lambda (v - \kappa) + \frac{1}{2} \eta^2 + \frac{1}{2} \rho \eta v.
\end{align*} 


\subsection{Lifted Heston model}
\label{sec:lifted_Heston}

The Heston model requires difficult modifications, such as making the parameters time dependent, in order to adequately calibrate the model to financial market data.
The rough Heston model, in which the volatility process is driven by a \textit{fractional} Brownian motion, is an alternative where only a few (constant) parameters are required in order to adequately calibrate the model; see \citet{gatheral2018volatility} and \citet{bayer2016pricing}.
The dynamics of the variance process in the rough Heston model are described by
\[
    V_t = V_0 + \int_0^t K(t-s)  \left( \lambda \left( \kappa - V_s \right) \ud s + \eta \sqrt{V_s} \ud B_s \right),
\]
where $K(t) = \frac{t^{H-\frac{1}{2}}}{\Gamma \left( H + \frac{1}{2} \right) }$ is the fractional kernel and $H \in \left( 0,\frac{1}{2} \right)$ is the Hurst index. 
This model is not Markovian, due to the presence of the fractional Brownian motion in the dynamics, hence the pricing and hedging of derivatives becomes a challenging task.
In particular, we do not have access to a PDE of the form \eqref{eq:non_linear_pde} for the rough Heston model.

The \textit{lifted} Heston model is a Markovian approximation of the rough Heston model, see \citet{abi2019multifactor}, that is interesting as a model in its own right; see also \citet{abi2019lifting}.
The dynamics of the asset price in the lifted Heston model are the following:
\begin{align}
\ud S_t & = r S_t \ud t + \sqrt{V_t^n} S_t \ud W_t, \quad S_0 > 0, \nonumber \\
\ud V_t^{n,i} & = - \left( \gamma_i^n V_t^{n,i} + \lambda V_t^n \right) \ud t + \eta \sqrt{V_t^n} \ud B_t,  \quad V_0^{n,i} = 0, \label{eq:Vni} \\
V_t^n & = g^n(t) + \sum_{i=1}^n c_i^n V_t^{n,i},  \label{eq:Vn} \\
g^n(t) & = V_0 + \lambda \kappa \sum_{i=1}^n c_i^n \int_0^t \ue^{- \gamma_i^n (t-s)} \ud s, \nonumber
\end{align}
where $\lambda, \eta, c_i^n, \gamma_i^n, V_0, \kappa \in \mathbb{R}_+$, and $W,B$ are two correlated (standard) Brownian motions with correlation coefficient $\rho$. 
The variance factors $V^{n,i}$ start from zero and share the same one-dimensional Brownian motion $B$.

The option pricing PDE \eqref{eq:non_linear_pde} has $n+1$ spatial variables $(x,v_1,\dots,v_n)$ in the lifted Heston model, corresponding to the asset price $S$ and the $n$ variance processes $V^{n,i}$ in \eqref{eq:Vni}.
The next section explains the sampling of the process $V^n$ in \eqref{eq:Vn}.

Let us now derive the generator for this system of SDEs.
We will denote by $V_i=V^{n,i}$, $f_{S} = \frac{\partial f}{\partial S}$ and $f_{V_i} = \frac{\partial f}{\partial V^{n,i}}$, and suppress the dependence on time for simplicity.
Let $f \left( S,V_1,\dots,V_n \right): \mathbb{R}^{n+1} \to \mathbb{R}$ be a $\mathcal{C}^2$-function. 
Using Itô's formula, we have that
\[
\begin{aligned}
\ud f \left( S,V_1,\dots,V_n \right) 
    &= f_{S} \ud S + \sum_{i=1}^n f_{V_i} \ud V_i + \frac{1}{2} f_{SS} \ud \left \langle S , S \right \rangle + \sum_{i=1}^n f_{S V_i} \ud \left \langle S, V_i \right \rangle + \frac{1}{2} \sum_{i,j=1}^n f_{V_i V_j} \ud \left \langle V_i, V_j \right \rangle\\
    &= f_{S} r S \ud t - \sum_{i=1}^n f_{V_i} \left(\gamma_i^n V_i + \lambda V_t^n \right) \ud t + \frac{1}{2} f_{SS} V_t^n S^2 \ud t + \sum_{i=1}^n f_{S V_i} \eta \rho V_t^n S \ud t \\
    &\quad + \frac{1}{2} \sum_{i,j=1}^n f_{V_i V_j} \eta^2 V_t^n \ud t + \textrm{(local) martingale}.
\end{aligned}
\]
Hence, the generator $\tilde{\mathcal A}$ of this system takes the form: 
\[
\tilde{\mathcal{A}} f = - r x f_{x} + \sum_{i=1}^n \left( \gamma_i^n v_i + \lambda V_t^n \right) f_{v_i} - \frac{1}{2} V_t^n x^2 f_{xx} - \eta \rho V_t^n x \sum_{i=1}^n f_{x v_i} - \frac{\eta^2}{2} V_t^n \sum_{i,j=1}^n f_{v_i v_j}.
\]
In order to bring this generator in the form \eqref{eq:generator}, we need to switch to the time to maturity, denoted now $\tau = T -t$.
We have to be particularly careful in this case, because the function $g^n$ in $V^n$ depends on the time $t$. 
Let us denote
\[
    \tilde V_{\tau}^n := g^n(T-\tau) + \sum_{i=1}^n c_i^n V_{\tau}^{n,i}.
\]
Then, applying the chain rule again, we have that
\begin{align*}
\mathcal{A} u 
    &=  - r x u_{x} + \sum_{i=1}^n \left(\gamma_i^n v_i + \lambda \tilde V_{\tau}^n \right) u_{v_i} - \left( \frac{1}{2} \tilde V_{\tau}^n x^2 u_{x} \right)_x + \tilde V_{\tau}^n x u_{x} - \frac{1}{2} \sum_{i=1}^n \left( \eta \rho \tilde V_{\tau}^n x u_{v_i} \right)_x \\
    &\quad + \frac{1}{2} \eta \rho \tilde V_{\tau}^n \sum_{i=1}^n u_{v_i} - \frac{1}{2} \sum_{i=1}^n \left( \eta \rho \tilde V_{\tau}^n x u_x \right)_{v_i} + \frac{1}{2} \eta \rho x \sum_{i=1}^n c_i^n u_x - \sum_{i,j=1}^n \left( \frac{\eta^2}{2} \tilde V_{\tau}^n u_{v_i} \right)_{v_j} \\ 
    &\quad + \frac{\eta^2}{2} \sum_{i,j=1}^n c_j^n u_{v_i} \\ 
    &= \left( \tilde V_{\tau}^n x - r x + \frac{1}{2} \eta \rho x \sum_{i=1}^n c_i^n \right) u_{x} + \sum_{i=1}^n \left(\gamma_i^n v_i + \lambda \tilde V_{\tau}^n + \frac{1}{2} \eta \rho \tilde V_{\tau}^n + \frac{\eta^2}{2} \sum_{j=1}^n c_j^n \right) u_{v_i} \\ 
    &\quad - \left( \frac{1}{2} \tilde V_{\tau}^n x^2 u_{x} \right)_x - \frac{1}{2} \sum_{i=1}^n \left( \eta \rho \tilde V_{\tau}^n x  u_{v_i} \right)_x - \frac{1}{2} \sum_{i=1}^n \left( \eta \rho \tilde V_{\tau}^n x u_x \right)_{v_i} - \sum_{i,j=1}^n \left( \frac{\eta^2}{2} \tilde V_{\tau}^n  u_{v_i} \right)_{v_j}.
\end{align*}
Therefore, the operator $\mathcal{A}$ takes the form \eqref{eq:operator_form} with the coefficients in \eqref{eq:coefficients} provided by 
\[
\begin{aligned}
a^{00} &= \frac{1}{2} \tilde V_{\tau}^n x^2,\\
a^{i0} = a^{0i} &= \frac{1}{2} \eta \rho \tilde V_{\tau}^n x, \quad & i=1,\dots,n, \\
a^{ij} &= \frac{\eta^2}{2} \tilde V_{\tau}^n, \quad & i,j=1,\dots,n, \\
b^0 &= \left( \tilde V_{\tau}^n - r + \frac{1}{2} \eta \rho \sum_{i=1}^n c_i^n \right) x, \\
b^i &= \left(\gamma_i^n v_i + \lambda \tilde V_{\tau}^n + \frac{1}{2} \eta \rho \tilde V_{\tau}^n + \frac{\eta^2}{2} \sum_{j=1}^n c_j^n \right), \quad & i=1,\dots,n. 
\end{aligned}
\]


\section{Implementation details}
\label{sec:implementation_details}

Let us now describe some details about the design of the neural network architecture and the implementation of the numerical method.
Motivated by \citet{georgoulis2024deep}, we would like to use information about the option price in order to facilitate the training of the neural network. 
On the one hand, using the lower no-arbitrage bound of a call option, \textit{i.e.} that $u(t, x) \geq \left(x - K \ue^{-rt} \right)^+$, the neural network will learn the difference between the option price and this bound, instead of the option price itself.
This difference turns out to be an easier function to approximate than the option price. 

On the other hand, after some large level of asset price / moneyness, the option price will grow linearly with the asset price / moneyness. 
Let us denote this level by $x_p$.
Therefore, instead of letting the neural network learn the option price at points larger than $x_p$, we will approximate this price by adding the difference between $x_p$ and this point to the option value at $x_p$.
In other words, 
\[
u(x_p + y; \theta) = u(x_p ; \theta) + y, \quad y>0.
\]
This is consistent with the imposition of Dirichlet boundary conditions on $\Omega$.

Moreover, we will consider the moneyness $\frac{S}{K}$ as an input variable instead of treating the asset price $S$ and the strike price $K$ separately, in order to reduce the size of the parameter space.\par

The architecture of the neural network for the TDGF method is inspired by the Deep Galerkin method (DGM) of \citet{sirignano2018dgm}, hence we call the hidden layers ``DGM layers''.
Overall, we set the following:
\begin{align*}
\text{input}\qquad    \quad
    X^0 & = \sigma_1 \left( W^1 \mathbf{x} + b^1 \right),  \\
    \text{ DGM layer}\\
    |\quad Z^l & = \sigma_1 \left( U^{z,l} \mathbf{x} + W^{z,l} X^{l-1} + b^{z,l} \right), \quad & l=1,\dots,L, \\
    |\quad G^l & = \sigma_1 \left( U^{g,l} \mathbf{x} + W^{g,l} X^{l-1} + b^{g,l} \right), \quad & l=1,\dots,L, \\
    |\quad R^l & = \sigma_1 \left( U^{r,l} \mathbf{x} + W^{r,l} X^{l-1} + b^{r,l} \right), \quad & l=1,\dots,L, \\
    |\quad\! H^l & = \sigma_1 \left( U^{h,l} \mathbf{x} + W^{h,l} \left( X^{l-1} \odot R^l \right) + b^{h,l} \right), \quad & l=1,\dots,L, \\
    \lfloor\quad\! X^l & = \left( 1 - G^l \right) \odot H^l + Z^l \odot X^{l-1}, \quad & l=1,\dots,L, \\
\text{output}\quad
    f(x;\theta) & = \left(x - K \ue^{-rt} \right)^+ + \sigma_2 \left( W X^L + b \right). 
\end{align*}
Here, $L$ is the number of hidden layers, $\sigma_i$ is the activation function for $i=1,2$, and $\odot$ denotes the element-wise multiplication. 
Moreover, let us point here that $\textbf{x}$ denotes the vector of inputs, which are the asset price and the variances (in the stochastic volatility models), and then $\mathbf{x}_1=x$.
In the numerical experiments, we have used 3 layers and 50 neurons per layer. 
The activation functions we have selected are the hyperbolic tangent function, $\sigma_1(x) = \tanh(x)$, and the softplus function, $\sigma_2(x) = \log \left( \ue^x +1 \right)$, which guarantees that the option price remains above the no-arbitrage bound. 

We consider a maturity of $T=1.0$ year, and use the parameters sets for the Heston and the lifted Heston model from \citet[Table 1]{heston1993closed} and \citet[page 321]{abi2019multifactor}.
We set the number of time steps equal to $N = 100$, use $2000$ sampling stages in each time step, while in each sampling stage we take 600 samples per dimension, \textit{i.e.} $600(n+1)$ samples, following the recommendations in \citet{georgoulis2023discrete}.
Here, $n$ denotes the number of variance processes, in the case of stochastic volatility models.
As for the sampling domain, we consider the moneyness $x \in [0.01,3.0]$, the Heston volatility $V \in [0.001, 0.1]$ and the lifted Heston volatilities $V^{n,i} \in \left[ -V_i^{\text{high}}, V_i^{\text{high}} \right]$, with
\[
    V_i^{\text{high}} = 3 \sqrt{\frac{\eta^2 V_0}{2 \gamma_i^n} \left( 1 - \ue^{-2 \gamma_i^n T} \right)}.
\]
After sampling $V^{n,i}$ uniformly, we calculate $V^n$ and discard all combinations that result in a negative $V^n$. 
Moreover, we set $x_p = 2.0$ 
In the optimization stage, we use the Adam algorithm \cite{kingma2014adam} with a learning rate $\alpha = 3 \times 10^{-4}$, $(\beta_1,\beta_2) = (0.9,0.999)$ and zero weight decay. 
Finally, the training was performed on the DelftBlue supercomputer \cite{DHPC2022}, using a single NVidia Tesla V100S GPU.


\section{Numerical results}
\label{sec:results}

Let us now present and discuss the outcome of certain numerical experiments involving all models outlined in Section \ref{sec:examples}. 
We are interested in both the accuracy, in Subsection \ref{sec:accuracy}, and the speed, in Subsection \ref{sec:running}, of the numerical method developed in this article. 

We will compare the TDGF method with another popular deep learning method for solving PDEs, the DGM of \citet{sirignano2018dgm}. 
In the DGM approach, the solution of the PDE is translated into a quadratic minimization problem, which is minimized using stochastic gradient descent. 
In our setting, this minimization takes the form:
\[
\begin{aligned}
\big \Vert u_t  - \nabla \cdot \left( A \nabla u \right) + \mathbf{b} \cdot \nabla u + ru \big \Vert_{L^2([0,T] \times \Omega)}^2 
+ \big \Vert u(0,\mathbf{x}) - \Psi(x) \big \Vert_{L^2(\Omega)}^2 \to \min. 
\end{aligned}
\]
Let us recall that, for large levels of asset price / moneyness, the option price grows linearly with the asset price / moneyness. 
Therefore, we add a loss term punishing the derivative of the option price with respect to the stock deviating from 1 at $x=3.0$. 

In order to ensure a fair comparison between the two methods, we use 200,000 sampling stages for the DGM. 
In each stage, we use $600(n+2)$ samples for the PDE in the interior domain and $600(n+1)$ samples for the initial and boundary conditions. 
Moreover, we use the same network architecture as for the TDGF with the same number of layers and neurons per layer, and also use the Adam optimizer.

In order to compare the accuracy of the two methods, we need a reference value. 
In the Black--Scholes model, in the case of a European call option, the option pricing PDE has an exact solution:
\[
u(t, x) = x \Phi(d_1) - K \ue^{-r \tau} \Phi(d_2),
\]
with $\Phi$ the cumulative standard normal distribution function, $\tau = T-t$ the time to maturity,
\[
d_1 = \frac{\log \left( \frac{x}{K} \right) + \left( r + \frac{\sigma^2}{2} \right) \tau}{\sigma \sqrt{\tau}} \quad \textrm{and} \quad d_2 = d_1 - \sigma \sqrt{\tau}.
\]

In the Heston model, the option pricing PDE does not admit an analytical solution. 
However, the model is affine and the characteristic function does have an analytical expression; see \citet{heston1993closed}. 
Using the characteristic function, we compute the reference price with the COS method of \citet{fang2009novel}, with $N=512$ terms and truncation domain $\left[-10 \tau^{\frac{1}{4}}, 10 \tau^{\frac{1}{4}} \right]$.

Finally, in the lifted Heston model, the characteristic function does not have an analytical expression but, as this model is again affine, the characteristic function is known up to the solution of a system of Riccati equations; see \citet{abi2019lifting}. 
We solve these equations using an implicit-explicit discretization scheme with 500 time steps, after which we apply the COS method again to compute the reference price.


\subsection{Accuracy}
\label{sec:accuracy}

In the Black--Scholes model, we have only one input variable, the asset price $S$. 
Figure \ref{fig:BS_call_error} presents the relative $L^2$- and the absolute maximum error for the TDGF and the DGM methods in the Black--Scholes model. 
These errors are calculated over 47 evenly spaced grid points in the interval $[0.01,3]$. 
Both methods have a comparable (small) error.

\begin{figure}
    \centering
    \begin{subfigure}{0.49\textwidth}
    \centering
    \includegraphics[width=\linewidth]{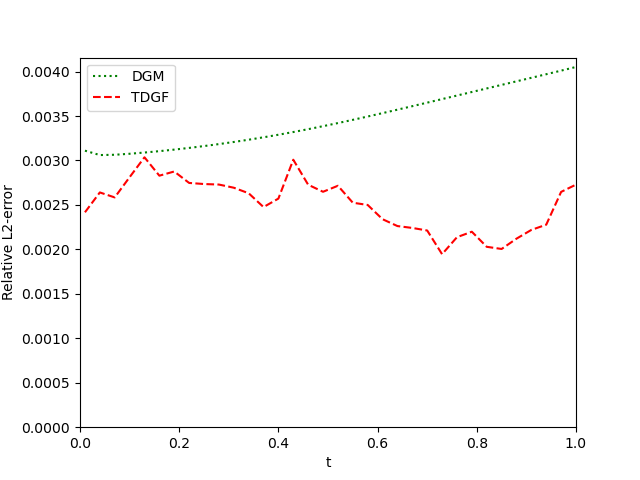}
    \caption{Relative $L^2$-error.}
    \end{subfigure}
    \begin{subfigure}{0.49\textwidth}
    \centering
    \includegraphics[width=\linewidth]{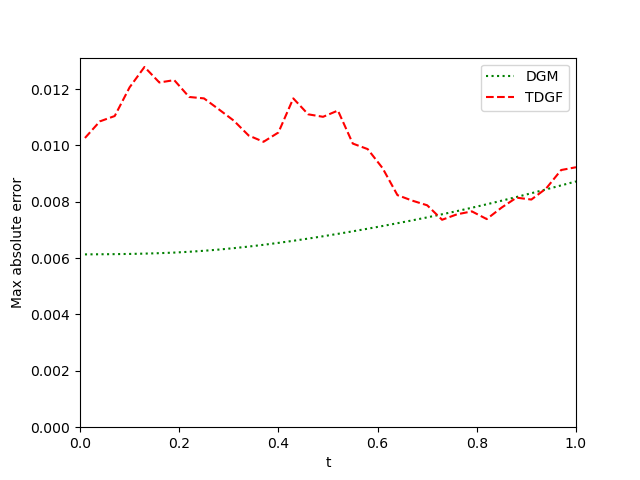}
    \caption{Maximal absolute error.}
    \end{subfigure}
    \caption{Errors of the two methods in the Black--Scholes model against time, with interest rate $r=0.05$ and volatility $\sigma = 0.25$.}
    \label{fig:BS_call_error}
\end{figure}

In the Heston model, we have two input variables, the asset price $S$ and the variance $V$. 
Figure \ref{fig:Heston_call_error} presents the relative $L^2$- and the absolute maximum error for the TDGF and the DGM methods in the Heston model. 
These figures are presented for fixed $V_0 = 0.03$, however other values of $V_0$ between 0.01 and 0.09 provide similar results. 
The DGM is more accurate than the TDGF method, although the TDGF is just as accurate as in the Black--Scholes model, and the error is any case small, in the order of $10^{-3}$.


\begin{figure}
    \centering
    \begin{subfigure}{0.49\textwidth}
    \centering
    \includegraphics[width=\linewidth]{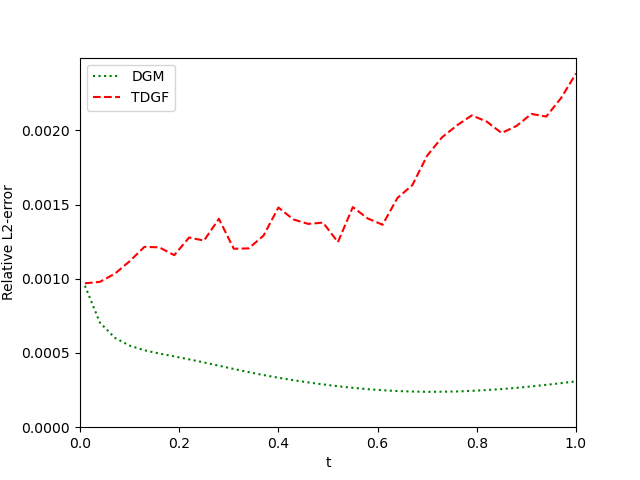}
    \caption{Relative $L^2$-error}
    \end{subfigure}
    \begin{subfigure}{0.49\textwidth}
    \centering
    \includegraphics[width=\linewidth]{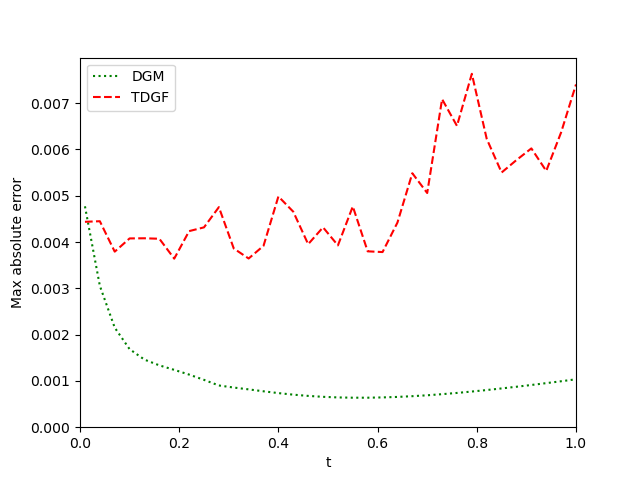}
    \caption{Maximal absolute error}
    \end{subfigure}
    \caption{Error of the two methods in the Heston model against time, with $r=0.0$ and $\eta = 0.1$, $\rho=0.0$, $\kappa=0.01$, $V_0 = 0.03$ and $\lambda =2.0$.}
    \label{fig:Heston_call_error}
\end{figure}

In the lifted Heston model, we have $n+1$ input variables, the asset price $S$ and the variances $V^{n,i}$, with $i=1,\dots,n$. 
Figure \ref{fig:Heston/rH_call_n=1_error} presents the relative $L^2$- and the maximum absolute error for the TDGF and the DGM methods in this model, in the case $n=1$, using the same parameters as in the Heston model before. 
The numerical results are comparable for the two methods, and both methods even seem to perform slightly better than in the previous example.

\begin{figure}
    \centering
    \begin{subfigure}{0.49\textwidth}
    \centering
    \includegraphics[width=\linewidth]{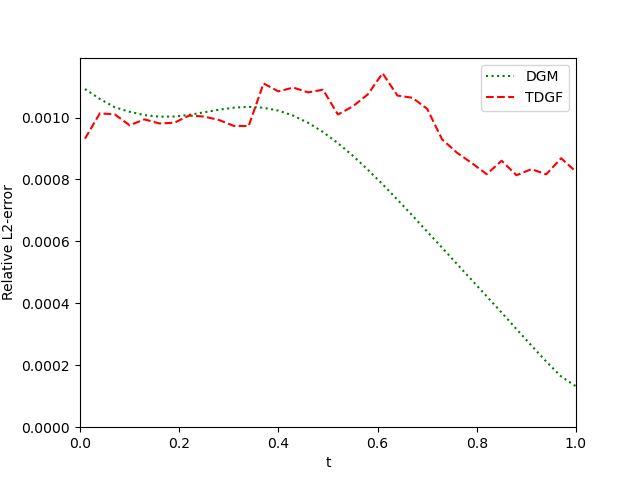}
    \caption{Relative $L^2$-error}
    \end{subfigure}
    \begin{subfigure}{0.49\textwidth}
    \centering
    \includegraphics[width=\linewidth]{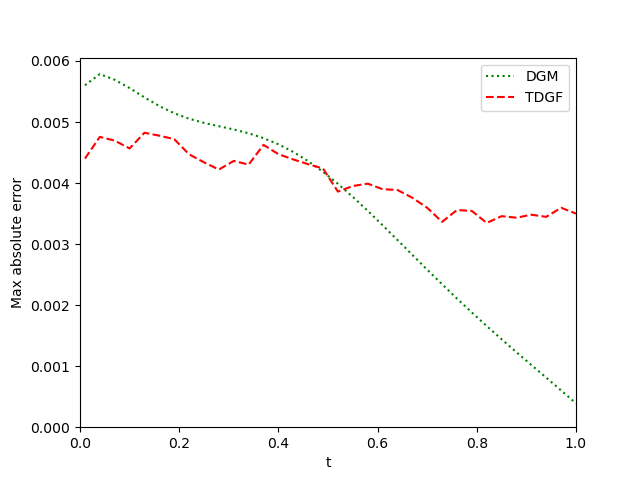}
    \caption{Maximal absolute error}
    \end{subfigure}
    \caption{Errors of the two methods in the lifted Heston model with $n=1$ variance process against time, with $r=0.0$, $\eta=0.1$, $\rho=0.0$, $\kappa=0.01$, $V_0=0.01$ and $\lambda=2.0$.}
    \label{fig:Heston/rH_call_n=1_error}
\end{figure}

Figure \ref{fig:Heston/rH_call_n=20_error} presents the relative $L^2$- and the maximum absolute error for the TDGF and the DGM methods again in the lifted Heston model, for the same parameter set as previously, but now with $n=20$ variance factors. 
The errors of both methods are comparable and, although they have increased compared to the $n=1$ case, they are still low, in the order of $10^{-3}$.


\begin{figure}
    \centering
    \begin{subfigure}{0.49\textwidth}
    \centering
    \includegraphics[width=\linewidth]{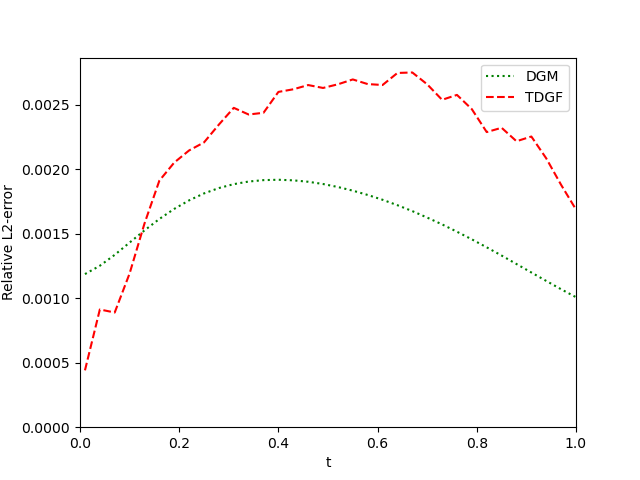}
    \caption{Relative $L^2$-error}
    \end{subfigure}
    \begin{subfigure}{0.49\textwidth}
    \centering
    \includegraphics[width=\linewidth]{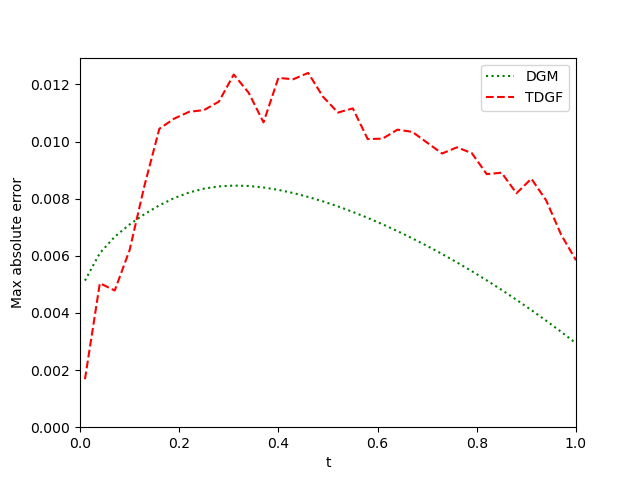}
    \caption{Maximal absolute error}
    \end{subfigure}
    \caption{Errors of the two methods in the lifted Heston model with $n=20$ variance process against time, with $r=0.0$, $\eta=0.1$, $\rho=0.0$, $\kappa=0.01$, $V_0=0.01$ and $\lambda=2.0$.}
    \label{fig:Heston/rH_call_n=20_error}
\end{figure}

Figure \ref{fig:AJ/rH_call_n=1_error} presents the relative $L^2$- and the absolute maximum error for the two methods in the lifted Heston model, now for the parameter set suggested by \citet{abi2019lifting} with $n=1$. 
The main difference here is that the correlation between the asset price and the variance processes is no longer zero.
The error in both methods has slightly increased with respect to the previous example, with the DGM performing slightly better than the TDGF. 
However, the overall error is still small, in the order of $10^{-3}$.


\begin{figure}
    \centering
    \begin{subfigure}{0.45\textwidth}
    \centering
    \includegraphics[width=\linewidth]{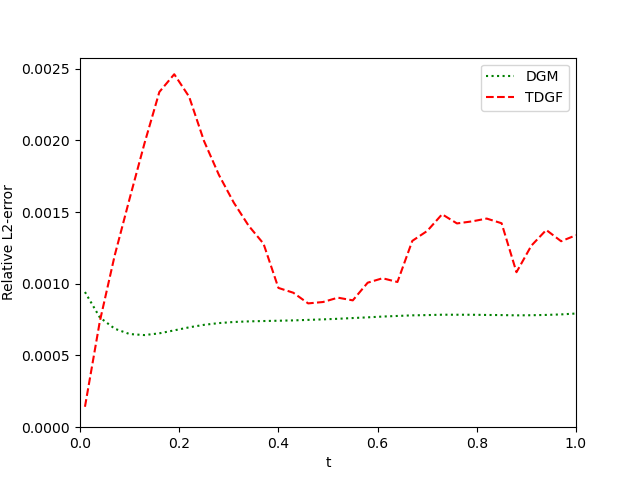}
    \caption{Relative $L^2$-error}
    \end{subfigure}
    \begin{subfigure}{0.45\textwidth}
    \centering
    \includegraphics[width=\linewidth]{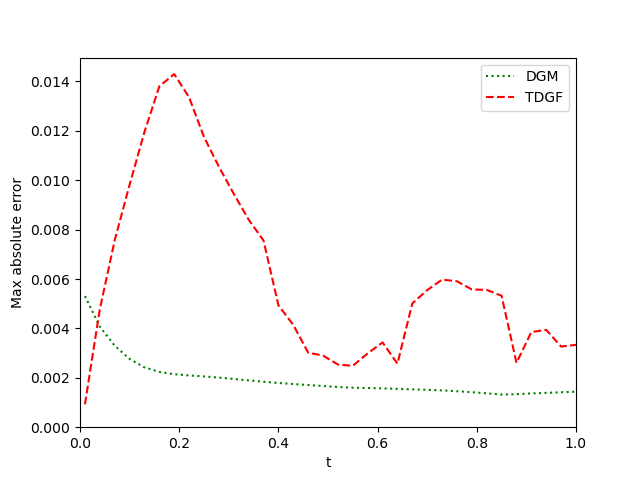}
    \caption{Maximal absolute error}
    \end{subfigure}
    \caption{Errors of the two methods in the lifted Heston model with $n=1$ variance process against time, with $r=0.0$, $\eta=0.3$, $\rho=-0.7$, $\kappa=0.02$, $V_0=0.02$ and $\lambda=0.3$.}
    \label{fig:AJ/rH_call_n=1_error}
\end{figure}

Finally, 
Figure \ref{fig:AJ/rH_call_n=5_error} presents the relative $L^2$- and the absolute maximum error for the two methods in the lifted Heston model, for the same parameter set as in the previous two figures, but now with $n=5$ and $\rho=-0.2$. 
The combination of increased dimension (\textit{i.e.} 5 variance processes compared to one before) and decreased correlation results in the errors of both methods being similar to the previous case, and still small, in the order of $10^{-3}$.
However, as the number of variance processes increase while the correlation is non-zero, then the errors in both methods are increasing rapidly, and we can no longer accommodate \textit{e.g.} 20 variance factors as we did in the uncorrelated case.


\begin{figure}
   \begin{subfigure}{0.45\textwidth}
    \centering
    \includegraphics[width=\linewidth]{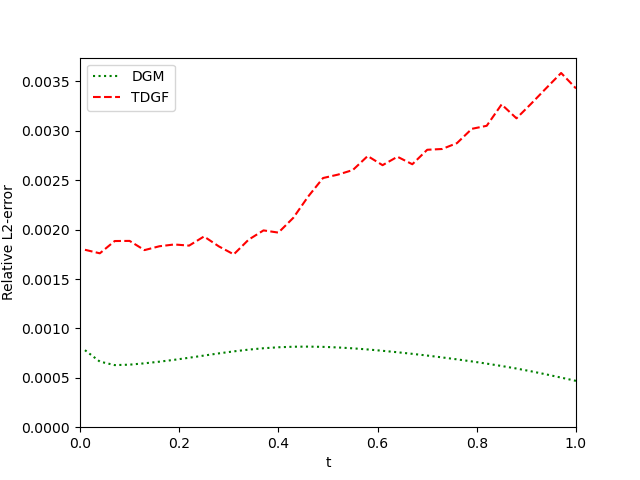}
    \caption{Relative $L^2$-error}
    \end{subfigure}
    \begin{subfigure}{0.45\textwidth}
    \centering
    \includegraphics[width=\linewidth]{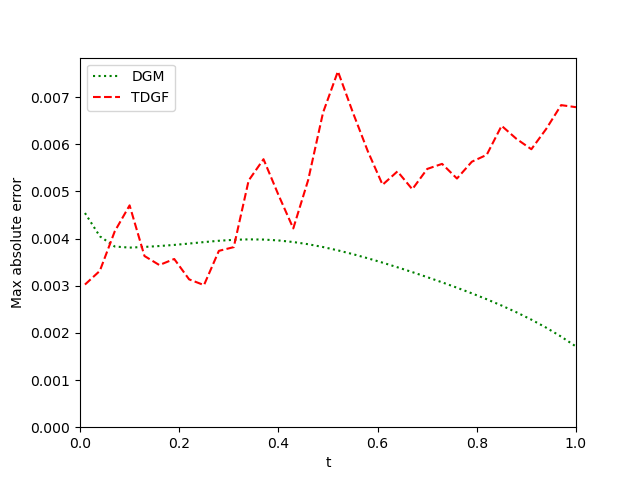}
    \caption{Maximal absolute error}
    \end{subfigure}
    \caption{Errors of the two methods in the lifted Heston model with $n=5$ variance processes against time, with $r=0.0$, $\eta=0.3$, $\rho=-0.2$, $\kappa=0.02$, $V_0=0.02$ and $\lambda=0.3$.}
    \label{fig:AJ/rH_call_n=5_error}
\end{figure}


\subsection{Training and computational times}
\label{sec:running}

Table \ref{tab:training_time} summarizes the training times for the TDGF and the DGM methods in the different models. 
As we have expected, due to the time stepping and the absence of a second derivative in the cost function, the training of the TDGF method is faster than for the DGM method.
Moreover, the training time for the TDGF method does not increase as the dimension of the problem (\textit{i.e.} the number of variance processes) increases.
On the contrary, the training time for the DGM method increases significantly with the number of dimensions, and becomes more than 4 times slower in the lifted Heston model when the dimension of the variance processes increases from 1 to 20.

The computational times for the TDGF and the DGM methods in all models are presented in Table \ref{tab:computing_time}. 
The computational times are taken as the average over 34 time points of computing the option prices on a grid of 47 moneyness points. 
Even though we take an average, there is still a high variance in the computational times. 
Both methods are significantly faster than the COS method in the lifted Heston model, in which we do not know the characteristic function explicitly and a system of Riccati equations needs to be solved for every evaluation of the characteristic function. 

The possibility of splitting the computation of an option price in a slow, offline phase (training) and a very fast, online one (computation), is one of the advantages of deep learning methods such as the TDGF proposed here and the DGM. 
However, we should point out that when changing the model parameters, then the TDGF and the DGM need to be retrained, thus the overall time (training and computation) is slower than the evaluation of the COS method.
The development of ``deep parametric PDE" methods as in \citet{Glau_Wunderlich_2022} would be an interesting avenue to explore in that respect.
Finally, let us also mention that the TDGF requires more memory than the DGM, around 10-100MB, as it has to store a separate neural network for each time step, whereas the DGM needs only one neural network for all times; the increase in speed though justifies this cost. 

\begin{table}
    \centering
    \begin{tabu}{l|c|c|c|c|c} \hline
        Model & Black--Scholes & Heston & $\ell$-Heston, $n=1$ & $\ell$-Heston, $n=5$ & $\ell$-Heston, $n=20$ \\ \hline
        DGM & {$7.0 \times 10^3$} & {$12.2 \times 10^3$} & {$12.5 \times 10^3$} & {$21.0 \times10^3$} & {$54.3 \times10^3$} \\ \hline
        TDGF & {$3.8 \times 10^3$} & {$\phantom{0}5.8 \times 10^3$} & {$\phantom{0}5.9 \times 10^3$} & {$\phantom{0}6.3 \times 10^3$} & {$\phantom{0}7.4 \times 10^3$} \\ \hline
    \end{tabu}
    \caption{Training time in seconds of the different methods for a European call option in the different models.}
    \label{tab:training_time}
\end{table}

\begin{table}
    \centering
    \begin{tabu}{l|l|l|l|l|l} \hline
        Model & Black--Scholes & Heston & $\ell$-Heston, $n=1$ & $\ell$-Heston, $n=5$ & $\ell$-Heston, $n=20$ \\ \hline
        Exact/COS & {0.0015} & {0.013} & {9.1} & {9.5} & {10.0} \\ \hline
        DGM & {0.0086} & {0.0015} & {0.0043} & {0.0052} & {\phantom{1}0.0015} \\ \hline
        TDGF & {0.0018} & {0.0018} & {0.0019} & {0.0019} & {\phantom{1}0.0018} \\ \hline
    \end{tabu}
    \caption{Computational time in seconds of the different methods for a European call option in the different models.}
    \label{tab:computing_time}
\end{table}


\section{Conclusion}
\label{sec:conclusion}

We have developed in this article a novel deep learning method for option pricing in diffusion models.
Starting from the option pricing PDE, using a backward differentiation time-stepping scheme and results from the calculus of variations, we can rewrite the PDE as an energy minimization problem for a suitable energy functional.
The time-stepping provides a good initial guess for the next step in the optimization procedure, while the energy formulation yields an equation that only has first derivatives, instead of second ones.
The energy minimization problem offers a suitable candidate for a cost function, and the PDE is solved by training a neural network using stochastic gradient descent-type optimizers (\textit{e.g.} Adam).
We have considered as examples for our method the Black--Scholes model, the Heston model and the lifted Heston model, and derived the appropriate representations for the PDE in all these cases.

Overall, the TDGF method developed here performs well in the numerical experiments, with an error in the order of $10^{-3}$.
Moreover, the training and computation (evaluation) times for the TDGF method are faster and more consistent compared to the popular DGM method.
However, when the dimension of the variance processes in the lifted Heston model increases and the correlation increases simultaneously, then the error also increases.
This increase in the error is due to the diffusion term becoming smaller, which means that the explicit term in the decomposition \eqref{eq:discretization} becomes the dominant term.
The development of suitable numerical methods for this regime will be the subject of future research.




\bibliographystyle{abbrvnat}
\bibliography{bibliography}


\end{document}